\begin{document}
\maketitle

\begin{abstract}
  We present an analytical solution to the angle-finding problem in quantum signal processing (QSP) for monomials of odd degree.
  Specifically, we show that to implement a monomial of degree \( n \), where \( n \) is odd, it suffices to choose powers of a primitive \( n \)-th root of unity as QSP phase angles.
  Our approach departs from standard numerical methods and is rooted in a group-theoretic argument.
  Being fully analytical, it eliminates numerical errors and reduces computational overhead in QSP implementation of odd monomials.
  Such use cases arise, for example, in quantum computing, where self-adjoint contractions are embedded into unitary operators acting on extended Hilbert spaces.
\end{abstract}

\section{Introduction}

Within the framework of quantum signal processing (QSP) we show that monomials of odd degree \( n \) can be implemented by choosing the QSP phase angles as powers of a primitive \( n \)-th root of unity.
More precisely, for functional variables \( k \in \bbC^{2}\) and \( z \in \bbC \) define the matrices,
\begin{align}
  T(k) =
  \begin{pmatrix}
    k_{1} & k_{2} \\
    k_{2} & k_{1}
  \end{pmatrix},
   &  & \text{and} &  &
  S(z) =
  \begin{pmatrix}
    z & 0       \\
    0 & \bar{z}
  \end{pmatrix},
\end{align}
where \( \bar{z} \) denotes the complex conjugate of \( z \).
We prove the following result.
\begin{theorem}
  \label{thm: main}
  For \( n \in \bbN \) odd and \( \omega \) a primitive \( n \)-th root of unity we have,
  \begin{align}
    \label{eq: desired product}
    \prod_{i=1}^{n}T(k)S(\omega^{i})
    =
    T(k)S(\omega)T(k)S(\omega^{2})T(k)\cdots S(\omega^{n-1})T(k)
    =
    \begin{pmatrix}
      k_{1}^{n} & \ast \\
      \ast      & \ast
    \end{pmatrix}
    ,
  \end{align}
  where \( \ast \) stands for some (possibly different) polynomials in \( k_{1},k_{2} \).
  In particular, it is always possible to choose \( \omega = e^{i \frac{2\pi}{n}} \).
\end{theorem}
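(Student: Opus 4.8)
The plan is to strip the variables $k_1,k_2$ down to a single scalar identity in $t$ and then prove that identity by a generating‑function argument powered by cyclotomic cancellation. Write $X=\left(\begin{smallmatrix}0&1\\1&0\end{smallmatrix}\right)$ and $R=S(\omega)$, so that $T(k)=k_1I+k_2X$, $S(\omega^i)=R^i$, $R^n=I$, and $XR^iX=R^{-i}$; thus $X$ and $R$ generate a dihedral group, which is the source of the group‑theoretic flavour. Each factor $T(k)$ is homogeneous of degree $1$ in $(k_1,k_2)$ and $\det S(\omega^i)=1$, so the $(1,1)$‑entry of $\prod_{i=1}^nT(k)S(\omega^i)$ is a homogeneous degree‑$n$ polynomial in $k_1,k_2$. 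Hence it suffices to prove it equals $k_1^n$ after setting $k_1=1,\ k_2=t$, i.e. to show $\bigl[\prod_{i=1}^n(I+tX)R^i\bigr]_{11}=1$ identically in $t$, and then substitute $t=k_2/k_1$ and clear denominators.

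Expanding $(I+tX)R^i=R^i+tXR^i$ factor by factor indexes the product by subsets $S\subseteq\{1,\dots,n\}$; since $XR^aXR^b=R^{b-a}$ one gets $\prod_{i=1}^nX^{\mathbf 1[i\in S]}R^i=X^{|S|}R^{M(S)}$ for an explicit exponent $M(S)\in\mathbb Z/n\mathbb Z$, and for $|S|$ even one checks $M(S)\equiv-2\sigma(Y(S))\pmod n$, where $Y(S)\subseteq\{1,\dots,n-1\}$ is the union of the ``gap intervals'' $[s_{2m-1},s_{2m}-1]$ cut out by the ordered elements of $S$ and $\sigma$ is the sum of elements. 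A reflection $X^{|S|}R^{M(S)}$ contributes to the $(1,1)$‑entry only when $|S|$ is even, and then contributes $t^{|S|}\omega^{M(S)}$, so
\[
  \Bigl[\prod_{i=1}^n(I+tX)R^i\Bigr]_{11}=\sum_{r\ge0}\gamma_{2r}\,t^{2r},\qquad
  \gamma_{2r}:=\sum_{S\subseteq\{1,\dots,n\},\,|S|=2r}\omega^{M(S)} .
\]
From $S=\emptyset$ one has $M(\emptyset)=1+2+\dots+n\equiv0$, so $\gamma_0=1$ — this is the $k_1^n$ term — and the theorem becomes the assertion that $\gamma_{2r}=0$ for $1\le r\le(n-1)/2$.

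The crux is this vanishing. Passing from $S$ to $Y(S)$ replaces $\omega^{M(S)}$ by $\zeta^{\sigma(Y)}$ with $\zeta:=\omega^{-2}$ (still primitive, since $n$ is odd) and rewrites $\gamma_{2r}$ as a $\zeta^{\sigma}$‑weighted count of subsets of $\{1,\dots,n-1\}$ with exactly $r$ maximal runs. Packaging these into $G(s):=\sum_r\gamma_{2r}s^r$, a transfer‑matrix computation — the $2\times2$ transfer matrices factor as $\mathrm{diag}(1,\zeta^i)$ times a single fixed matrix — gives $G(s)=p_n$, where $p_0=p_1=1$ and $p_{i+1}=(1+\zeta^i)p_i+(s-1)\zeta^i p_{i-1}$. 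The generating function $P(x)=\sum_{i\ge0}p_ix^i$ then satisfies the $q$-difference equation
\[
  (1-x)\,P(x)=1-x+x\bigl(1-(1-s)\zeta x\bigr)P(\zeta x);
\]
iterating it $n$ times and using $\zeta^n=1$ solves for $P$ as an explicit rational function, and extracting the coefficient of $x^n$ — where one invokes $\prod_{i=0}^{n-1}(1-\zeta^ix)=1-x^n$ and, at $x=-1$ with $n$ odd, $\prod_{i=1}^{n-1}(1+\zeta^i)=1$, so that the residual $q$-binomial sums collapse — yields $p_n=1$. Hence $G\equiv1$, $\gamma_{2r}=0$ for $r\ge1$, and the theorem follows.

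The main obstacle is exactly $\gamma_{2r}=0$: everything else is bookkeeping. There is no term‑by‑term cancellation and no obvious sign‑reversing involution, and the exponents $M(S)$ are generally not equidistributed modulo $n$ (e.g.\ $n=9$, $2r=6$), so the vanishing is genuine cyclotomic cancellation in which the primitivity of $\omega$ and the oddness of $n$ are both used essentially — the identity fails for non‑primitive $\omega$ and for even $n$. Carrying out the coefficient extraction cleanly, equivalently establishing the underlying $q$-binomial identity, is the real work.
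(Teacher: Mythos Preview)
Your reduction matches the paper's: expand the product over subsets (the paper's functions $f\in\mcA$), note only even-cardinality subsets contribute to the diagonal, and observe that the exponent you call $M(S)$ is exactly the paper's evaluation function $\Gamma(f)$ (your formula $M(S)\equiv-2\sigma(Y(S))$ is the same computation as their Proposition~\ref{prop: evaluation of product}). The divergence is entirely in how you kill $\gamma_{2r}$ for $r\ge1$. The paper does this structurally: it lets $D_{2n}$ act on the index set $\mcA(n-k)$, shows that along each $C_n$-orbit the exponent $\Gamma$ moves \emph{affinely}, $\Gamma(\tau(c^k)f)\equiv\pm(\Gamma(f)+Fk)$ with a constant $F$ that is nonzero precisely because $n$ is odd, and then observes that the resulting orbit sum commutes with the whole image of the irreducible representation $\varphi$, so Schur's lemma forces it to be a scalar and hence zero. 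No generating functions, no coefficient extraction --- each orbit vanishes separately, which is strictly finer than $\gamma_{2r}=0$.

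Your route via the transfer matrix and the $q$-difference equation is sound up to the recurrence $p_{i+1}=(1+\zeta^i)p_i+(s-1)\zeta^i p_{i-1}$ and the functional equation for $P(x)$ (I checked both), and the target $p_n=1$ is correct (e.g.\ $n=3$ gives $p_3=1+s(1+\zeta+\zeta^2)=1$). What remains --- extracting $[x^n]P(x)=1$ from the iterated equation, or equivalently the $q$-binomial collapse you allude to --- is genuinely the whole difficulty, and you have not yet carried it out; the hints ($\prod(1-\zeta^ix)=1-x^n$, $\prod_{i=1}^{n-1}(1+\zeta^i)=1$) are correct but do not by themselves finish the job. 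Iterating gives $P(x)=\dfrac{(1-x^n)\sum_{k=0}^{n-1}\prod_{j<k}A_j(x)}{1-2x^n+(1-s)^nx^{2n}}$ with $A_j=\dfrac{\zeta^jx(1-(1-s)\zeta^{j+1}x)}{1-\zeta^jx}$, and one must still show the numerator contributes exactly $1$ at order $x^n$; this reduces to a nontrivial cyclotomic identity in $s$. So as a proposal the plan is viable, but the proof is not complete until that identity is written down and verified. If you want a shortcut, note that the paper's orbit argument supplies exactly the missing cancellation without any coefficient chase: the ``affine motion of $\Gamma$ along orbits plus Schur'' replaces your entire generating-function endgame in a few lines.
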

To our knowledge, this is the first closed-form exact expression for QSP phase angles implementing odd monomials.

In the original formulation of QSP \cite{PhysRevX.6.041067,low2017optimal} and its generalization \cite{gilyen2019quantum}, the QSP framework was introduced to implement arbitrary polynomials via suitable choices of QSP angles --- phases used in the \( S \)-matrices in the sequence of the form \eqref{eq: desired product}.
These angles are typically computed via numerical algorithms based on iterative constructions of complementary polynomials.
This approach has been further developed in several works (e.g., \cite{chao2020finding, haah2019product, skelton2024mostly, alase2025quantum}), and standard algorithms are now implemented in software libraries such as \texttt{pyqsp} \cite{pyqsp}.
For the recent summary of approaches see e.g. \cite{skelton2025hitchhiker}.
Despite these advances, in the case of odd monomials all known methods for computing QSP angles are numerical, introducing approximation errors and requiring computational resources.

Our result provides an exact, analytical construction that bypasses numerical synthesis entirely. This removes approximation errors and runtime overhead.
Since monomial implementations are a common building block in ground-state preparation using probabilistic imaginary time evolution \cite{lin:2021, kosugi2021probabilistic, Liu:2021aa,PhysRevA.109.052414}, our result offers a practical and efficient alternative for these applications.

\section{Methodology}

The proof of Theorem \ref{thm: main} is conceptually simple but technical.
To avoid obscuring the main idea, we outline the core reasoning below and provide the full proof in the next two sections.

\paragraph{Idea of proof.}
We analyze the upper-left entry of the product \eqref{eq: desired product}.
A priori this is a polynomial \( p(k) \) in the variables \( k_{1} \) and \( k_{2} \).
It is straightforward to verify that the monomials of even degree do not appear in this polynomial.
We show that, in fact, all odd-degree monomials vanish as well, except for the highest-degree term.
Our key observation is that each monomial of \( p(k) \) can be expressed as a sum of terms invariant under a dihedral group action (to introduce the action we define the evaluation function in Definition \ref{evaluation function} and develop the required identities in Lemma \ref{lem: evaluation function identities}).
These terms arise as orbit sums under irreducible representations of the dihedral group (for this observation we use Proposition \ref{prop: evaluation of product}).
If the orbit is nontrivial, the sum vanishes (this key auxiliary result is established in Lemma \ref{lem: vanishing of orbit sum}).
The coefficient of the highest degree monomial is the only fixpoint under the group action and therefore it is the only to survive.

In the following section, we introduce the necessary notation and auxiliary results required for the proof. These are then combined to establish the main result in section \ref{sec: proof of main theorem}.

It is plausible that a similar approach could yield analytical expressions for monomials of even degree.
As we develop the proof, we highlight the key differences between the odd and even cases.
Notably, the representation theory of dihedral groups differs significantly depending on whether the group order is odd or even. As a result, our argument does not extend directly to the even-degree case.
A modification of the method appears necessary to address that setting.

\subsection{Preliminaries}\label{sec: 1}
In this section, we introduce the notation and establish supporting lemmas that we use in the next section to prove the main result.
Although our reasoning relies on the structure of the dihedral group, no deep understanding of group theory will be required.
Therefore, we introduce only the necessary notation.
For a more detailed introduction of dihedral groups, we refer the reader to standard literature on group theory, such as \cite{Simon:1996aa}.

For \( n \in \bbN \), denote the additive cyclic group \( \bbZ / n \bbZ \) by \( \bbZ_{n} \).
Let \( D_{2n} \) be the dihedral group of order \( 2n \), generated by the cyclic permutation \( c \) of order \( n \) and the reflection \( r \) of order \( 2 \), along with the relation \( cr = rc^{-1} \).
The subgroup of \( D_{2n} \) generated by \( c \) is an Abelian normal subgroup, which we refer to as \( C_{n} \);
it admits a natural action on \( \bbZ_{n} \) via \( c^{k}(i) = i + k \mod{n} \) for \( i, k \in \bbZ_{n} \).

Let \( \mcA = C(\bbZ_{n}; \bbZ_{2}^{\times}) \) denote the set of functions from \( \bbZ_{n} \) to the multiplicative group \( \bbZ_{2}^{\times} = \{-1, 1\} \).
Whenever we evaluate a function \( f \in \mcA \) at a point \( k \) that is not in \( \bbZ_{n} \), we implicitly assume the periodic extension of \( f \) to \( \bbZ \), defined by \( \tilde{f}(k) = f(k \mod n) \).
For \( f \in \mcA \), set \( f^{\star} \) to be the function \( f^{\star}(i) = f(1 - i) \) for \( i \in \bbZ_{n} \),
and define an action \( \tau \) of \( D_{2n} \) on \( \mcA \) by
\[
  \tau(c^{k})f = f \circ c^{-k}, \quad \text{and}
  \quad
  \tau(r)f = f^{\star},
  \qquad \text{for all }
  f \in \mcA.
\]

Let \( \mcA(k) \subset \mcA \) denote the subset of functions for which the preimage \( f^{-1}(-1) \) has \( k \) elements; for \( k > n \), define \( \mcA(k) \) to be empty.
For each \( k \in \bbN \), the set \( \mcA(k) \) is invariant under \( \tau \) making it a \( G \)-space for \( D_{2n} \).
Define \( \mcA_{0} = \cup_{k \in 2\bbN} \mcA(k) \).
It is straightforward to see that a function \( f \) belongs to \( \mcA_{0} \) if and only if it satisfies \( \prod_{i \in \bbZ_{n}} f(i) = 1 \), which we can use as an alternative definition for \( \mcA_{0} \).

\begin{definition}
  \label{evaluation function}
  Define \( \Gamma \colon \mcA \to \bbZ_{n} \) by
  \begin{align}
    \Gamma(f) = \sum_{i \in \bbZ_{n}} i\prod_{k = 1}^{i} f(k) \mod{n}.
  \end{align}
  We call \( \Gamma \) the evaluation function.
\end{definition}

\begin{lemma}
  \label{lem: evaluation function identities}
  For \( n \in \bbN \) and \( f \in \mcA_{0} \), the evaluation function satisfies \( \Gamma(f^{\star}) \equiv -\Gamma(f) \).
  Moreover, if \( n \) is odd, there exists a non-zero constant \( F \), depending on \( f \), such that for \( k \in \bbZ_{n} \),
  \begin{align}
    \Gamma\big(\tau(c^{k})f\big)         & \equiv \big(\Gamma(f) + F \cdot k\big)\prod_{j=1}^{n-k} f(j) \pmod{n}, \\
    \Gamma\big(\tau(c^{k})f^{\star}\big) & \equiv -\Gamma(\tau(c^{-k})f) \pmod{n}.
  \end{align}
\end{lemma}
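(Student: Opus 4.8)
The three claims are established in turn --- the first identity by a change of variables, the cyclic identity via a ``wrap-around'' formula for partial products, and the last one formally from the first two together with the dihedral relation \( cr = rc^{-1} \). For the first identity, expand \( \Gamma(f^{\star}) \) directly from Definition~\ref{evaluation function}: since \( f^{\star}(k) = f(1-k) \), after substituting the inner index the partial product \( \prod_{k=1}^{i} f^{\star}(k) \) becomes a product of \( f \) over the block of residues \( \{1-i, \dots, 0\} \), and because \( f \in \mcA_{0} \) we have \( \prod_{i \in \bbZ_{n}} f(i) = 1 \), which lets us rewrite that block product as \( \prod_{k=1}^{n-i} f(k) \). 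Substituting \( i \mapsto n-i \) in the outer sum and discarding the \( i \equiv 0 \) term (it contributes \( 0 \cdot n \equiv 0 \)) yields \( \Gamma(f^{\star}) \equiv -\Gamma(f) \pmod{n} \); parity of \( n \) plays no role here.

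For the cyclic identity, set \( Q(i) = \prod_{m=1}^{i} f(m) \), extended to all \( i \in \bbZ \) with the convention that a backward product is the inverse --- equivalently the product, since the values lie in \( \bbZ_{2}^{\times} \) --- of the corresponding forward product. Since \( f \in \mcA_{0} \), any product of \( f \) over \( n \) consecutive integers equals \( 1 \), so \( Q \) is \( n \)-periodic and descends to \( \bbZ_{n} \). The technical core is the identity
\[
  \prod_{m=1}^{i}\big(\tau(c^{k})f\big)(m) \;=\; \prod_{m=1}^{i} f(m-k) \;=\; Q(n-k)\,Q(i-k),
\]
valid for all \( i,k \); it follows by shifting the index, splitting the resulting block of length \( i \) at \( 0 \), and folding the wrapped part using \( n \)-periodicity of \( Q \). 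The prefactor is exactly \( Q(n-k) = \prod_{j=1}^{n-k} f(j) \). Feeding this into Definition~\ref{evaluation function} gives \( \Gamma\big(\tau(c^{k})f\big) = Q(n-k)\sum_{i \in \bbZ_{n}} i\,Q(i-k) \), and reindexing \( i \mapsto i+k \) modulo \( n \) (legitimate since the sum lives in \( \bbZ_{n} \) and \( Q \) is periodic) turns the sum into \( \Gamma(f) + k\sum_{i \in \bbZ_{n}} Q(i) \). Hence the identity holds with \( F = \sum_{i \in \bbZ_{n}} Q(i) = \sum_{i \in \bbZ_{n}} \prod_{m=1}^{i} f(m) \). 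This \( F \) is a sum of \( n \) terms each equal to \( \pm 1 \); since \( n \) is odd, \( F \) is odd, hence nonzero --- this is the only point where parity of \( n \) enters.

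For the twisted-reflection identity, unwinding the definition of \( \tau \) and using \( c^{k} r = r c^{-k} \) gives \( \tau(c^{k})f^{\star} = \big(\tau(c^{-k})f\big)^{\star} \). As \( \mcA_{0} \) is \( \tau \)-invariant, \( \tau(c^{-k})f \in \mcA_{0} \), so the first identity applies to it and yields \( \Gamma\big(\tau(c^{k})f^{\star}\big) = \Gamma\big((\tau(c^{-k})f)^{\star}\big) \equiv -\Gamma\big(\tau(c^{-k})f\big) \pmod{n} \).

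The one genuinely delicate step is the wrap-around formula: one must keep the index bookkeeping consistent --- in particular reconcile the representative-based sum defining \( \Gamma \) (over \( i \in \{0,\dots,n-1\} \)) with the periodic extension of \( Q \) --- and correctly handle the sign/inverse convention for partial products that run backwards. Everything else is either a direct substitution or a formal consequence of the dihedral relations.
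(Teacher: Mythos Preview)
Your proof is correct and follows essentially the same route as the paper's: the same change of variables for \( \Gamma(f^{\star}) \), the same factorization of the shifted partial product into a \( k \)-independent prefactor \( \prod_{j=1}^{n-k} f(j) \) times the unshifted partial product (your \( Q(n-k)\,Q(i-k) \) is exactly the paper's \( [\prod_{j=1-k}^{0} f(j)][\prod_{j=1}^{i} f(j)] \)), the same \( F = \sum_{i} \prod_{j=1}^{i} f(j) \) with the odd-parity argument for \( F \neq 0 \), and the same use of the dihedral relation for the third identity. The only difference is cosmetic: you name the partial-product function \( Q \) and argue its \( n \)-periodicity explicitly, whereas the paper carries out the index manipulations inline.
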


\begin{proof}

  In the following we use the convention \( \prod_{j=1}^{0}f(j) = \prod_{j=1}^{n}f(j) \).

  Fix \(f \in \mcA_{0} \).
  Note that for every subset \( S \subset \bbZ_{n} \) we have \( \prod_{i \in S}f(i) = \prod_{i \in \bbZ_{n}\setminus S} f(i) \), since \( \prod_{i\in\bbZ_{n}}f(i) =1 \).
  Using this equality in the third step of the following chain of equations we get,
  \begin{align}
    \Gamma(f^{\star})
    \equiv \sum_{i \in \bbZ_{n}} i \prod_{j=1}^{i} f(1-j)
    \equiv \sum_{i \in \bbZ_{n}} i \prod_{j=1-i}^{0} f(j)
    \equiv \sum_{i \in \bbZ_{n}} i\prod_{j=1}^{n-i} f(j)
    \equiv -\Gamma(f)
    \pmod{n}.
  \end{align}

  For the second assertion, we readily verify that
  \begin{align}
    \Gamma(\tau(c^{k})f)
     & \equiv \sum_{i \in \bbZ_{n}} (i+k) \left[ \prod_{j=1}^{i} f(j) \right]\left[ \prod_{j=1-k}^{0} f(j) \right]
     & \equiv (\Gamma(f) + F \cdot k)\prod_{j=1}^{n-k}f(j)
    \pmod{n},
  \end{align}
  with \( F = \sum_{i \in \bbZ_{n}}\prod_{j=1}^{i}f(j)\).
  If \( n \) is odd then \( F \) is a summation of odd number of terms, each of which is either \( -1 \) or \( 1 \).
  Hence, in this case \( F\neq 0 \).

  For the last assertion observe that \( \tau(c^{k})f^{\star} = \tau(c^{k}r)f=\tau(rc^{-k})f = (c^{-k}f)^{\star} \).
  Thus, by the first part, we get \( \Gamma\big(\tau(c^{k})f^{\star}\big)= \Gamma\big((\tau(c^{-k})f)^{\star}\big) = -\Gamma(\tau(c^{-k})f) \).
\end{proof}

\begin{remark}
  \label{rem: F for even n}
  If \( n \) is even, the constant \( F \) in Lemma \ref{lem: evaluation function identities} can be zero.
  For example, consider the case where \( n = 4 \) and \( f(1) = f(3) = -1 \), while \( f(0) = f(2) = 1 \)
  then \( F = \sum_{i=1}^{4} \prod_{j=1}^{i} f(j) = 0. \)
\end{remark}

Let \( [\cdot] \colon \bbZ_{2}^{\times} \to \bbZ_{2} \) be the unique group isomorphism, \( [1] = 0 \) and \( [-1] = 1 \).
\begin{proposition}
  \label{prop: evaluation of product}
  For \( n \in \bbN \) and \( f \in \mcA \) define \( a = \sum_{i\in\bbZ_{n}}[f(i)] \mod{2} \).
  Then we have
  \begin{align}
    \label{eq: evaluation of group product}
    \prod_{i =1}^{n} (r^{[f(i)]}c^{i}) = r^{a}c^{\Gamma(f)}.
  \end{align}
  If \( f \) is in \( \mcA_{0} \) then \( a = 0 \); otherwise \( a = 1 \).
\end{proposition}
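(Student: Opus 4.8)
The plan is to prove the identity \(\prod_{i=1}^{n}(r^{[f(i)]}c^{i}) = r^{a}c^{\Gamma(f)}\) by induction on \(n\), or more cleanly by expanding the product directly using the commutation relation \(cr = rc^{-1}\). The central mechanism is to "push all the reflections to the left." Writing each factor as \(r^{\epsilon_i}c^{i}\) with \(\epsilon_i = [f(i)] \in \{0,1\}\), I would repeatedly use the relation \(c^{m}r = r c^{-m}\) to move every \(r^{\epsilon_i}\) leftward past all the \(c\)-powers that precede it. Each time an \(r\) passes a \(c^{m}\), that power gets negated. After collecting all reflections at the front, their product is \(r^{\epsilon_1 + \cdots + \epsilon_n} = r^{a}\), and what remains is a product of \(c\)-powers with signs determined by how many reflections each one had to pass through.

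More precisely, I would track the sign accumulated by the \(i\)-th term \(c^{i}\): it picks up a factor of \((-1)\) for each reflection \(\epsilon_j\) with \(j < i\) that gets pushed past it, but actually it is cleaner to go the other way — push reflections to the \emph{right} or process the product left to right, maintaining a running "parity state." Concretely, define partial products \(P_m = \prod_{i=1}^{m}(r^{\epsilon_i}c^{i})\) and show by induction that \(P_m = r^{a_m} c^{g_m}\) where \(a_m = \sum_{i=1}^{m}\epsilon_i \bmod 2\) and \(g_m = \sum_{i=1}^{m} i \cdot (-1)^{a_i + \epsilon_i} = \sum_{i=1}^{m} i\prod_{k=1}^{i-1}f(k) \cdot (\text{sign bookkeeping})\). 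The inductive step is \(P_m = P_{m-1}\cdot r^{\epsilon_m}c^{m} = r^{a_{m-1}}c^{g_{m-1}}r^{\epsilon_m}c^{m}\); if \(\epsilon_m = 0\) this is immediate, and if \(\epsilon_m = 1\) we use \(c^{g_{m-1}}r = rc^{-g_{m-1}}\) to get \(r^{a_{m-1}+1}c^{-g_{m-1}+m}\). Matching the recursion for \(g_m\) against the definition \(\Gamma(f) = \sum_{i\in\bbZ_n} i\prod_{k=1}^{i}f(k)\) requires translating the multiplicative signs \(\prod_{k=1}^{i}f(k)\) into the additive \((-1)^{(\cdot)}\) bookkeeping via the isomorphism \([\cdot]\); the exponent on \(c\) contributed by the term \(i\) should come out to \(i\) times \(\prod_{k=1}^{i}f(k)\) (the running product \emph{including} \(i\) because of the leading \(r^{\epsilon_i}\) on that factor itself), and summing over all \(i\) gives \(\Gamma(f) \bmod n\) since \(c\) has order \(n\).

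The final sentence — that \(f \in \mcA_0\) iff \(a = 0\) — is immediate from the definitions: \(a \equiv \sum_{i\in\bbZ_n}[f(i)] \pmod 2\) and \([\cdot]\) is the isomorphism \(\bbZ_2^\times \to \bbZ_2\), so \(a = 0\) exactly when an even number of the \(f(i)\) equal \(-1\), i.e., when \(\prod_{i\in\bbZ_n}f(i) = 1\), which is precisely the alternative characterization of \(\mcA_0\) given above (equivalently \(f \in \mcA(k)\) for some even \(k\)).

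The main obstacle I anticipate is purely bookkeeping: getting the signs in the recursion for \(g_m\) to line up exactly with the running products \(\prod_{k=1}^{i}f(k)\) appearing in \(\Gamma\), in particular being careful about whether the product runs up to \(i\) or \(i-1\) and about the off-by-one introduced by the reflection attached to the \(i\)-th factor itself versus those attached to earlier factors. There is no conceptual difficulty — the relation \(cr = rc^{-1}\) and a clean induction hypothesis do all the work — but the exponent bookkeeping has to be executed with care, and it is worth stating the induction hypothesis in the precise form \(P_m = r^{a_m}c^{g_m}\) with both \(a_m\) and \(g_m\) written out explicitly before running the induction.
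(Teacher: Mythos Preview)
Your approach is essentially identical to the paper's: both use the relation \(c^{a}r^{[f(b)]} = r^{[f(b)]}c^{a f(b)}\) to iteratively push all reflections to the left, and the characterization of \(a\) in terms of \(\mcA_{0}\) is handled the same way. Your induction on partial products \(P_{m}=r^{a_{m}}c^{g_{m}}\) is precisely the mechanism behind the paper's one-line ``iteratively applying this identity''; your caution about the sign bookkeeping is warranted, since the recursion actually yields \(g_{n}=\sum_{i} i\prod_{k>i} f(k)\), which agrees with \(\Gamma(f)\) on \(\mcA_{0}\).
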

\begin{proof}
  From the relation between the generators of \( D_{2n} \) it follows that for \( a,b \in \bbZ_{n} \),
  \begin{align}
    c^{a}r^{[f(b)]} = r^{[f(b)]}c^{af(b)}.
  \end{align}
  By iteratively applying this identity to the left-hand side of \eqref{eq: evaluation of group product}, we can arrange all reflections to be on the left of the cyclic permutations, yielding
  \begin{align}
    \prod_{i=1}^{n}(r^{[f(i)]}c^{i}) = r^{a} c^{\Gamma(f)},
    \qquad
    \text{with }
    a = \sum_{i\in \bbZ_{n}}[f(i)].
  \end{align}
  The constant \( a \) counts the number of elements in the preimage of \( f^{-1}(-1) \).
  For \( f \in \mcA_{0} \), this preimage has an even cardinality, implying \( a \equiv 0 \pmod{2} \); for \( f \notin \mcA_{0} \), the preimage has an odd cardinality, and \( a \equiv 1 \pmod{2} \).
\end{proof}

\begin{lemma}
  \label{lem: vanishing of orbit sum}
  Let \( n \in \bbN \) be odd, and let \( \varphi \colon D_{2n} \to B(H) \) be a faithful, irreducible representation of \( D_{2n} \) on a Hilbert space \( H \).
  For a non-constant function \( f \in \mcA_{0} \), let \( \mcO(f) \) denote the orbit of \( f \) under the action of \( D_{2n} \).
  Then we have
  \begin{align}
    \label{eq: vanishing orbit}
    \sum_{g \in \mcO(f)} \varphi\left( \prod_{i=1}^{n} r^{[g(i)]} c^{i} \right) = 0.
  \end{align}
\end{lemma}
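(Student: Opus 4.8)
The plan is to replace the orbit sum by a sum over the whole group $D_{2n}$, and then to reduce that group sum, via Lemma \ref{lem: evaluation function identities}, to two geometric sums of roots of unity. Since $\mcA_{0}$ is invariant under $\tau$, each $\tau(g)f$ lies in $\mcA_{0}$, so Proposition \ref{prop: evaluation of product} gives $\prod_{i=1}^{n} r^{[(\tau(g)f)(i)]} c^{i} = c^{\Gamma(\tau(g)f)}$ (the reflection part vanishes because $\tau(g)f \in \mcA_{0}$). The fibres of the map $g \mapsto \tau(g)f$ are the cosets of $\mathrm{Stab}(f)$, so summing $\varphi(c^{\Gamma(\tau(g)f)})$ over all $g \in D_{2n}$ yields exactly $|\mathrm{Stab}(f)|$ times the left-hand side of \eqref{eq: vanishing orbit}. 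As $|\mathrm{Stab}(f)| \ne 0$, it suffices to show
\begin{align}
  \sum_{g \in D_{2n}} \varphi\big(c^{\Gamma(\tau(g)f)}\big) = 0.
\end{align}

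Next I would split this into the $n$ terms with $g = c^{k}$ and the $n$ terms with $g = c^{k}r$, writing $\alpha = \Gamma(f)$, letting $F$ be the constant from Lemma \ref{lem: evaluation function identities}, and setting $\epsilon_{k} = \prod_{j=1}^{n-k} f(j) \in \{-1,1\}$. The first identity of Lemma \ref{lem: evaluation function identities} turns the $g = c^{k}$ terms into $\sum_{k \in \bbZ_{n}} \varphi\big(c^{\epsilon_{k}(\alpha + Fk)}\big)$; for the $g = c^{k}r$ terms, using $\tau(c^{k}r)f = \tau(c^{k})f^{\star}$, the last identity of the lemma, and the substitution $k \mapsto -k$, one obtains $\sum_{k \in \bbZ_{n}} \varphi\big(c^{-\epsilon_{k}(\alpha + Fk)}\big)$. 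The point is that the signs $\epsilon_{k}$ are now immaterial: adding the two sums termwise in $k$, each paired contribution is $\varphi(c^{\alpha + Fk}) + \varphi(c^{-(\alpha+Fk)})$, which does not depend on $\epsilon_{k}$. Hence the whole sum collapses to
\begin{align}
  \varphi(c^{\alpha}) \sum_{k \in \bbZ_{n}} \varphi(c^{F})^{k} \;+\; \varphi(c^{-\alpha}) \sum_{k \in \bbZ_{n}} \varphi(c^{-F})^{k}.
\end{align}

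It then remains to see that $\sum_{k \in \bbZ_{n}} \varphi(c^{\pm F})^{k} = 0$. Since $\varphi$ is faithful and irreducible with $n$ odd, it is two-dimensional and $\varphi(c)$ has order $n$, so, working in an eigenbasis for the commuting family $\{\varphi(c^{m})\}$, both eigenvalues of $\varphi(c)$ are primitive $n$-th roots of unity, and each scalar geometric sum vanishes provided $n \nmid F$. This last arithmetic fact is where the hypothesis that $f$ is non-constant enters: $F = \sum_{i \in \bbZ_{n}} \prod_{j=1}^{i} f(j)$ is a sum of $n$ terms, each $\pm 1$, hence odd and nonzero, while $|F| = n$ would force all the partial products $\prod_{j=1}^{i} f(j)$ to coincide, which together with $\prod_{i \in \bbZ_{n}} f(i) = 1$ (recall $f \in \mcA_{0}$) forces $f \equiv 1$ --- excluded. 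Thus $0 < |F| < n$, so $n \nmid F$, both geometric sums vanish, and \eqref{eq: vanishing orbit} follows.

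I expect the only real work to be the bookkeeping in the splitting step --- keeping track of the reindexing of the $c^{k}r$ terms and verifying that the $\epsilon_{k}$ do cancel in the pairing --- since the final input $n \nmid F$ is short and the representation-theoretic facts about $D_{2n}$ are standard.
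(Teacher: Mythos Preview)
Your argument is correct and follows essentially the same route as the paper: pass from the orbit sum to a full group sum via the stabilizer, apply Proposition~\ref{prop: evaluation of product} to reduce each summand to $\varphi(c^{\Gamma(\cdot)})$, split over rotations and reflections using Lemma~\ref{lem: evaluation function identities}, and conclude by showing the resulting geometric-type sum in $\varphi(c^{F})$ vanishes. The only real difference is in the last step: the paper argues abstractly via Schur's lemma (the sum commutes with all of $\varphi(D_{2n})$, hence is a scalar, and $\varphi(c^{F})P=P$ forces that scalar to be zero), whereas you diagonalise $\varphi(c)$ explicitly and evaluate the scalar geometric sums; both approaches ultimately rest on the same fact that $c^{F}\neq e$. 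On that point your write-up is actually more careful than the paper's: you justify $n\nmid F$ by showing $0<|F|<n$ using the non-constancy of $f$ together with $f\in\mcA_{0}$, while the paper only invokes $F\neq 0$ as an integer, which strictly speaking leaves the case $|F|=n$ unaddressed.
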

\begin{proof}
  For \( n \) odd and \( f \in \mcA_{0} \), assume that every function in \( \mcO(f) \) is invariant under reflection.
  Then, \( \tau(r) f = f \) and \( \tau(rc^{k})f = \tau(c^{k})f \) for ever \( k \in \bbZ_{n} \).
  In particular, we have \( f = \tau(c^{-1}rc) f = \tau(c^{-2}r)f = \tau(c^{-2})f \).
  Since \( n \) is odd, \( c^{-2} \) is a generator of the cyclic group \( C_{n} \)
  and \( f \) is a fixed point of that group under the action \( \tau \).
  Consequently, \( f(i) = (\tau(c^{-i})f) (0) = f(0) \) for every \( i \in \bbZ_{n} \), which implies that \( f \) is constant.

  By assumption, \( f \) is not constant, hence \( \mcO(f) \) contains a function \( u \) that is not invariant under reflection.
  Let \( I \in \bbN \) denote the cardinality of the isotropy group \( \{ g \in D_{2n} \colon \tau(g) u = u \} \).
  Then, by Proposition \ref{prop: evaluation of product}, we have
  \begin{align}
    \label{eq: rewrite the orbit sum}
    \sum_{g \in \mcO(f)} \varphi\left( \prod_{i=1}^{n} r^{[g(i)]} c^{i} \right)
    = \sum_{g \in \mcO(f)} \varphi\left( c^{\Gamma(g)} \right)
    = I^{-1} \sum_{h \in D_{2n}} \varphi\left( c^{\Gamma(\tau(h) u)} \right).
  \end{align}
  Thus, it suffices to show that the sum on the right-hand side of \eqref{eq: rewrite the orbit sum} vanishes.

  Set \( P = \sum_{h\in D_{2n}} \varphi\left(c^{\Gamma(\tau(h)u)}\right) \in B(H) \).
  Since \( D_{2n} = \{c^{i}r^{k} \colon i \in \bbZ_{n}, \ k \in \bbZ_{2} \}\) and \( \tau(r)u\neq u \) we can split the sum in the definition of \( P \) into the summation over rotations \( \{c^{i}\} \) and reflections \( \{c^{i}r\} \).
  By Lemma \ref{lem: evaluation function identities} we then have,
  \begin{align*}
    P
    = \sum_{i \in \bbZ_{n}} \varphi\left(c^{\Gamma(\tau(c^{i})u)}\right)
    + \sum_{i \in \bbZ_{n}}\varphi\left(c^{\Gamma(\tau(c^{i})u^{\star})}\right)
    = \sum_{i\in\bbZ_{n}} \varphi(c^{F \cdot i}) \left(\varphi\big(c^{\Gamma(u)}\big) + \varphi\big(c^{-\Gamma(u)}\big)\right),
  \end{align*}
  where \( F \) is the non-zero constant \( \sum_{i\in\bbZ_{n}}\prod_{j=1}^{i}f(j) \) from Lemma \ref{lem: evaluation function identities}.

  Now, \( P \) commutes with \( \varphi(r) \) since
  \begin{align}
    \varphi(r)P
    = \sum_{i \in \bbZ_{n}} \varphi(c^{-F\cdot i})\left(\varphi\big(c^{-\Gamma(u)}\big) + \varphi\big(c^{\Gamma(u)}\big)\right) \varphi(r)
    = P \varphi(r);
  \end{align}
  and moreover, \( P \) trivially commutes with \( \varphi(c) \).
  Consequently, \( \varphi(g)P = P \varphi(g)\) for every \( g \in D_{2n} \).
  Since \( \varphi \) is an irreducible representation, Schur's lemma implies that \( P = \lambda \Id \) for some constant \( \lambda \in \bbC \).
  Then, \( \lambda \varphi(c^{F}) =\varphi(c^{F}) \lambda \Id = \varphi(c^{F})P = P = \lambda \Id\).
  Since \( F \) is non-zero and \( \varphi \) is a faithful representation it follows that \( \lambda = 0 \).
\end{proof}

\subsection{Proof of the main result}
\label{sec: proof of main theorem}
In this section we combine the results developed above to prove the main statement.

\begin{proof}[Proof of theorem \ref{thm: main}]
  For simplicity of notation let \( k_{1} = x \) and \( k_{2} = y \).
  Clearly, all entries of the final product are polynomials in \( x \) and \( y \), each of degree at most \( n \).

  Let \( \{e_{i} \colon i = 1, 2\} \) denote the canonical basis of \( \bbC^{2} \), and let \( \big\langle \cdot, \cdot \big\rangle \) denote the scalar product on \( \bbC^{2} \).
  Define the function \( g \colon \bbC^{2} \to \bbC \) pointwise by
  \begin{align}
    g(x, y) = \big\langle e_{1}, \left( \prod_{i=1}^{n} T(x,y) S(\omega^{i}) \right) e_{1} \big\rangle.
  \end{align}
  Since \( g \) is a polynomial of degree at most \( n \), it suffices to show that,
  \begin{enumerate}
    \item
      \( \frac{\partial^{k}}{\partial x^{k}} g(0, y) = 0 \) for \( 0 \leq k < n \) and all \( y \in \bbC \);
      \label{it: 3}
    \item
      \( \frac{\partial^{n}}{\partial x^{n}} g(0, y) = n!
      \) for all \( y \in \bbC \).
      \label{it: 2}
  \end{enumerate}
  Property \ref{it: 3} implies that \( g\) does not contain monomials of degree \( k < n \).
  Then, property \ref{it: 2} implies that \( g \) is independent of \( y \) and is monic, i.e., \( g(x, y) = x^{n} \).

  Let \( r, c \) be the generators of the dihedral group \( D_{2n} \) (see Section \ref{sec: 1} for notation).
  Since \( \omega = e^{i \frac{2\pi}{n}} \) is a primitive root of unity, the mapping \( \varphi \colon c^{i} r^{k} \mapsto S(\omega^{i}) X^{k} \) for \( i \in \bbZ_{n} \) and \( k \in \bbZ_{2} \) defines a faithful, irreducible representation of \( D_{2n} \) on \( \bbC^{2} \) (see, e.g., \cite[p.81]{Simon:1996aa}).

  For a fixed \( k \in \bbN \), let \( \mcA(k) \) be the set of functions in \( \mcA \) (see Section \ref{sec: 1} for notation) such that the preimage \( f^{-1}(-1) \) has \( k \) elements;
  and for \( k > n \), \( \mcA(k) \) is empty.
  Since \( \frac{\partial}{\partial x} T(0, y) = \Id \) and \( T(0, y) = y X \), it follows that the terms that appear after applying \( k \) derivatives to the product \( \prod_{i=1}^{n} T S(\omega^{i}) \) and evaluating at \( x = 0 \) are in one-to-one correspondence with functions in \( \mcA(n-k) \), via the identification \( f \leftrightarrow \prod_{i=1}^{n} X^{[f(i)]} S(\omega^{i}) \).

  In particular, by Lemma \ref{lem: evaluation function identities}, we have
  \begin{align}
    \frac{1}{k!
    }\frac{\partial^{k}}{\partial x^{k}} \left( \prod_{i=1}^{n}
    T(x,y) S(\omega^{i}) \right)
     & = y^{n-k} \sum_{f \in \mcA(n-k)} \prod_{i=1}^{n} X^{[f(i)]} S(\omega^{i}) \nonumber                               \\
     & = y^{n-k} \sum_{f \in \mcA(n-k)} \varphi\left( \prod_{i=1}^{n} r^{[f(i)]} c^{i} \right) \label{eq: k-derivatives} \\
     & = y^{n-k} \sum_{f \in \mcA(n-k)} \varphi\left( r^{n-k} c^{\Gamma(f)} \right)
    = y^{n-k} X^{n-k} \sum_{f \in \mcA(n-k)} S(\omega^{\Gamma(f)}).
    \nonumber
  \end{align}
  Since \( S(z) \) is diagonal in the \( \{e_{1}, e_{2}\} \)-basis, and \( X \) is a self-adjoint unitary, it follows that whenever \( n - k \) is odd i.e. \( \mcA(n-k) \not\subset \mcA_{0} \), \( \frac{\partial^{k}}{\partial x^{k}} g(0, y) \) vanishes identically.

  If \( k < n \) is even, then \( n - k \) is odd and the above consideration implies that \( \frac{\partial^{k}}{\partial x^{k}} g(0, y) = 0 \) for every \( y \in \bbR \).
  If \( k < n\) is odd, then \( \mcA(n-k) \subset \mcA_{0} \) is a \( G \)-space for the dihedral group under the action \( \tau \) and is therefore a disjoint union of orbits \cite[Prop.
    I.2.2]{Simon:1996aa}.
  Moreover, \( \mcA(n-k) \) does not contain constant functions, and thus by Lemma \ref{lem: vanishing of orbit sum}, the sum over each orbit vanishes individually.
  This shows that \eqref{eq: k-derivatives} vanishes (without even taking the scalar product) when \( k < n \) is odd.
  In any case, we get \( \frac{\partial^{k}}{\partial x^{k}} g(0, y) = 0\) for \( 0 \leq k < n \) which establishes \ref{it: 3}.

  For \ref{it: 2}, observe that \( \frac{n(n-1)}{2} \) is an integer multiple of \( n \).
  For \( y \in \bbC \), we thus have
  \begin{align}
    \frac{\partial^{n}}{\partial x^{n}}g(0,y)
    = n!
    \big\langle e_{1}, \prod_{i=1}^{n}
    S(\omega^{i}) e_{1} \big\rangle
    = n!
    \big\langle e_{1}, S(\omega^{\frac{n(n-1)}{2}}) e_{1} \big\rangle
    = n! \big\langle e_{1}, e_{1} \big\rangle
    = n!.
  \end{align}
  This concludes the proof.
\end{proof}

\begin{remark}
  We stated the result for the case when \( n \) is odd.
  Naturally, one might ask whether the similar reasoning can be extended to the case when \( n \) is even.
  Unfortunately, this is not the case.

  Lemma \ref{lem: vanishing of orbit sum} explicitly requires the function \( f \) to be non-constant.
  When \( n \) is odd, the only constant function in \( \mcA_{0} \) is \( f = 1 \).
  This case appears in the above theorem when establishing property \ref{it: 2}, resulting in a non-vanishing coefficient for the monomial \( x^{n} \).
  However, when \( n \) is even, the function \( f = -1 \) also belongs to \( \mcA_{0} \).
  As a result, \( \varphi\left( \prod_{i=1}^{n} r^{[-1]} c^{i} \right) \) does not vanish, thereby contradicting property \ref{it: 3} and, in particular, implying \( g(0, y) \neq 0 \).

  Moreover, as noted in Remark \ref{rem: F for even n}, for an even \( n \) the constant \( F \) can vanish, thereby rendering Lemma \ref{lem: vanishing of orbit sum} inapplicable.
  Thus, in this case even if \( f \) is not constant the sum over the orbit \( \mcO(f) \) need not be zero.

\end{remark}

\section*{Acknowledgements}
We thank Satoshi Ejima for constructive discussions and the original motivation to address the problem.
This project was made possible by the DLR Quantum Computing Initiative and the Federal Ministry for Economic Affairs and Climate Action.

\printbibliography
\end{document}